\documentclass[letterpaper]{article} 
\usepackage{aaai25}  
\usepackage{times}  
\usepackage{helvet}  
\usepackage{courier}  
\usepackage[hyphens]{url}  
\usepackage{graphicx} 
\urlstyle{rm} 
\usepackage{natbib}  
\usepackage{caption} 
\frenchspacing  
\setlength{\pdfpagewidth}{8.5in}  
\setlength{\pdfpageheight}{11in}  
%
\usepackage{algorithm}
\usepackage{amsthm}
\usepackage{amsmath}
\usepackage{multirow}
\usepackage{float}
\usepackage{algorithm}
\usepackage{algpseudocode}
\usepackage{bm}
\usepackage{graphicx}
\usepackage{caption}
\newtheorem{theorem}{Theorem}
\usepackage{subfigure}
\usepackage{adjustbox}
\usepackage[utf8]{inputenc} 
\usepackage[T1]{fontenc}    
\usepackage{url}            
\usepackage{booktabs}       
\usepackage{amsfonts}       
\usepackage{nicefrac}       
\usepackage{microtype}      
\usepackage{xcolor}         
%
\usepackage{newfloat}
\usepackage{listings}
\DeclareCaptionStyle{ruled}{labelfont=normalfont,labelsep=colon,strut=off} 
\lstset{%
	basicstyle={\footnotesize\ttfamily},
	numbers=left,numberstyle=\footnotesize,xleftmargin=2em,
	aboveskip=0pt,belowskip=0pt,%
	showstringspaces=false,tabsize=2,breaklines=true}
\floatstyle{ruled}
\newfloat{listing}{tb}{lst}{}
\floatname{listing}{Listing}
%
\pdfinfo{
/TemplateVersion (2025.1)
}

\setcounter{secnumdepth}{0} 

%


\title{From Pairwise to Ranking: Climbing the Ladder to Ideal Collaborative Filtering with Pseudo-Ranking}
\author{
    Yuhan Zhao\textsuperscript{\rm 1, 2}, Rui Chen\textsuperscript{\rm 1}\thanks{Corresponding authors}, Li Chen\textsuperscript{\rm 2}, Shuang Zhang\textsuperscript{\rm 1}, Qilong Han\textsuperscript{\rm 1}, Hongtao Song\textsuperscript{\rm 1}\footnotemark[1]\\
}
\affiliations{
    \textsuperscript{1} College of Computer Science and Technology, Harbin Engineering University \\
    \textsuperscript{2} Department of Computer Science, Hong Kong Baptist University\\
    \{csyhzhao,lichen\}@comp.hkbu.edu.hk, \{ruichen, zhangshuang, hanqilong, songhongtao\}@hrbeu.edu.cn
}

\usepackage{bibentry}

\begin{document}

\maketitle
\begin{abstract}
Intuitively, an ideal collaborative filtering (CF) model should learn from users' full rankings over all items to make optimal top-K recommendations. Due to the absence of such full rankings in practice, most CF models rely on pairwise loss functions to approximate full rankings, resulting in an immense performance gap. In this paper, we provide a novel analysis using the multiple ordinal classification concept to reveal the inevitable gap between a pairwise approximation and the ideal case. However, bridging the gap in practice encounters two formidable challenges: (1) none of the real-world datasets contains full ranking information; (2) there does not exist a loss function that is capable of consuming ranking information. To overcome these challenges, we propose a pseudo-ranking paradigm (PRP) that addresses the lack of ranking information by introducing pseudo-rankings supervised by an original noise injection mechanism. Additionally, we put forward a new ranking loss function designed to handle ranking information effectively. To ensure our method's robustness against potential inaccuracies in pseudo-rankings, we equip the ranking loss function with a gradient-based confidence mechanism to detect and mitigate abnormal gradients. Extensive experiments on four real-world datasets demonstrate that PRP significantly outperforms state-of-the-art methods. 

\end{abstract}

\section{Introduction}
Collaborative filtering (CF) has been one of the most fundamental techniques in recommender systems due to its simplicity and effectiveness~\cite{MZW21}. It leverages user-item interactions to learn user preferences and make top-K recommendations~\cite{HLZ17,WHW19,HDW20}. To find the top-K items, a CF model normally needs to generate a full ranking over the entire item universe. To fulfill this task, ideally, given a set of users $U$, one should derive a CF model that maximizes the posterior probability $\prod_ {u \in U}p(\Theta \mid \Psi_u)$, where $\Theta$ denotes the set of the model's parameters that need to learn, and $\Psi_u$ is user $u \in U$'s full ranking of all items. However, this ideal optimization goal is unattainable in real-world scenarios, as it is impractical for users to fully rank a vast number of items.

In practice, most CF methods~\cite{HDW20,WWF21,YYX22} rely on pairwise loss functions to approximate full rankings. We refer to these as pairwise CF. Its core idea is to make positive item $i_p$ (i.e., an item interacted by a user) more similar to user $u$ than negative item $i_n$. For example, Bayesian personalized ranking (BPR)~\cite{RFG12} pairs each positive item with a negative item to form a pairwise relationship for training. Formally, a pairwise loss function forms a pairwise relationship $i_p >_{u} i_n$ (i.e., $u$ prefers item $i_p$ to $i_n$) as the ground truth. In this context, the optimization objective is to maximize the posterior probability $\prod_ {(u, i_p, i_n) \in D}p(\Theta \mid i_p>_{u} i_n)$ based on a set of pairwise relationships, where $D$ is the training dataset. Despite the strong performance of pairwise CF models, approximating full rankings with pairwise relationships introduces a significant gap compared to the ideal CF model. In this paper, we provide a novel analysis using the concept of multiple ordinal classification to \textit{reveal the inherent gap between the pairwise approximation and the ideal case}. However, bridging the gap in practice encounters two formidable challenges:
\begin{itemize}
    \item \textbf{None of the real-world datasets contains full ranking information}: Obtaining full rankings requires extensive user participation, a feat that is unachievable in most practical scenarios.
    \item \textbf{Lack of a loss function that is capable of handling ranking information}: Current CF loss functions focus on non-ranking relationships, such as pairwise or listwise~\cite{XLW08}, and are not designed to effectively and efficiently process ranking data.
\end{itemize}
To overcome the aforementioned challenges, we introduce a novel pseudo-ranking paradigm (PRP). For the first challenge, we propose a ranker module to generate a pseudo-ranking of given items. However, training the ranker demands supervised signal (i.e. ranking information), presenting a classic chicken-and-egg dilemma. We innovatively propose a noise injection mechanism to add different levels of noise to positive samples to construct a ranking in line with user preferences. The intuition is that when the injection noise is very small, the semantics of the constructed $i_p^2$ will not change significantly. When the injection noise is large, the semantics of the positive item $i_p^3$ will be seriously broken~\cite{HHD18,ZPL23}, and users are difficult to love the item. Accordingly, $i_p >_u i_p^2 >_u i_p^3$ is constructed as a supervision. For the second challenge, inspired by the multiple ordinal classification concept, we then put forward a ranking loss function grounded in classification principles. This function ensures that an item's score is proportional to its ranking order: the higher the rank, the higher the score. To further ensure our method's robustness against potential inaccuracies in pseudo-rankings provided by the ranker, we equip the ranking loss function with an original gradient-based confidence mechanism, which detects outliers by assessing gradient density. By reducing the weights of outliers, the impact of inaccurate ranking information on training can be alleviated. We summarize our main contributions as follows:
\begin{itemize}
    \item To the best of our knowledge, we are the first to use the multiple ordinal classification concept to approximate ideal CF rather than traditional pairwise loss. We provide a new direction for future development of loss functions.
    \item We propose a novel ranker, which can generate pseudo-rankers according to user preferences through a noise injection mechanism, effectively solving the problem that the existing dataset does not contain ranking information.
    \item We propose a novel loss fucnion, which can directly deal with ranking data that traditional loss fucnions cannot directly deal with, and a gradient-based confidence mechanism, which can judge outliers by gradients and alleviate inaccurate ranking information.
    \item We conduct extensive experiments on four real-world datasets that confirm that PRP can achieve significant improvements over representative state-of-the-art methods. Moreover, when combined with PRP, a wide variety of mainstream CF models can consistently and substantially boost their performance. 
\end{itemize}

\section{Understanding Ideal Collaborative Filtering }
\begin{figure*}[t]
  \centering
\includegraphics[width=0.8\textwidth]{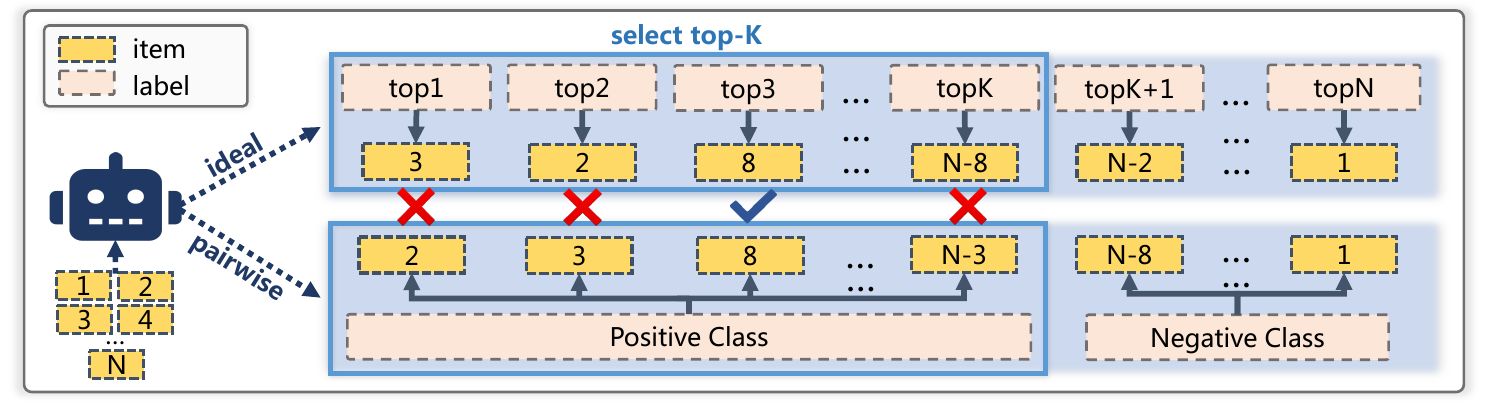}
  \caption{A comparison between ideal CF and pairwise CF.}
  \label{fig:nips}
\end{figure*}
According to maximum likelihood estimation, an ideal CF model should be trained by maximizing the posterior probability $\prod_ {u \in U}p(\Theta \mid \Psi_u)$, where $\Psi_u$ represents the full ranking of all items by user $u \in U$. However, in real-world scenarios, this ideal approach is impractical due to the absence of complete ranking information in datasets. Consequently, conventional CF models approximate this by using pairwise loss functions, shifting the optimization objective to maximizing the posterior probability $\prod_ {(u, i_p, i_n) \in D}p(\Theta \mid i_p>_{u} i_n)$. A classic example of this is BPR:
\begin{equation}
\mathcal{L}_{BPR}= -\ln [\sigma(s_u(i_p) - s_u(i_n))],
\label{eq:bpr}
\end{equation}
where $s_u(\cdot)$ denotes the score of an item for user $u$, and $\sigma(\cdot)$ is the $\operatorname{sigmoid}$ function. Models trained with pairwise loss functions are termed pairwise CF models. 

To model the ideal CF, inspired by works such as~\cite{YK20,LWB07}, we propose a novel approach: transforming CF ranking into a multiple ordinal classification problem. Specifically, we hypothesize the existence of $N$ distinct labels $(top1, top2, top3, \ldots, topN)$. For each user $u$, the ideal CF model would assign each item to the appropriate label based on user preference. When making recommendations, these labels are treated as ordinal, with lower $top*$ values indicating higher recommendation priority. Figure~\ref{fig:nips} illustrates this concept. The classification loss in this context can be expressed as:
\begin{equation}
\mathcal{L}_{class} = -\log\frac{\exp(z_t)}{\sum_{v=1}^{N} \exp(z_v)},
\end{equation}
where $z_t$ denotes the score of the target class. Under this framework, pairwise CF can be viewed as a process where each user $u$ considers certain labels (e.g., $top1, top2, \ldots, topK$) as positive samples, while the remaining labels are negative. This comparison highlights that existing pairwise CF methods fall short of achieving ideal CF, exposing a significant gap. This observation raises a critical question: \textbf{Is there a connection between optimizing ideal CF and existing metrics (e.g., NDCG)}? While these metrics are not without flaws, they have proven effective in both practical and theoretical settings. Therefore, it is neither practical nor advisable to disregard them entirely~\cite{PCH24}. Understanding the relationship between ideal CF and traditional metrics is essential.
\begin{theorem}
\label{th:2}
Optimizing ideal CF is a sufficient but not necessary condition for optimizing NDCG.
\end{theorem}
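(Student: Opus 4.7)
The plan is to decompose the biconditional-looking claim into its two halves and handle each separately: first argue that attaining the ideal CF objective forces NDCG to be maximized (sufficiency), and then exhibit a concrete situation in which NDCG is maximized while the learned ordering is not the ideal full ranking $\Psi_u$ (failure of necessity). Throughout, I will rely on the standard formulation $\mathrm{NDCG}@K(u) = \mathrm{DCG}@K(u)/\mathrm{IDCG}@K(u)$ with $\mathrm{DCG}@K(u)=\sum_{k=1}^{K} \mathrm{rel}(i_{u,k})/\log_2(k+1)$, where $i_{u,k}$ is the item predicted at position $k$ for user $u$.

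For the sufficiency direction, I would unpack what it means for the model to solve ideal CF: the learned scoring function induces an ordering over the full item universe that agrees with the user's true preference ranking $\Psi_u$. I would then invoke the well-known rearrangement property of DCG, namely that $\sum_k \mathrm{rel}(i_{u,k})/\log_2(k+1)$ is maximized exactly when items are sorted in non-increasing order of relevance, which is the defining property of IDCG. Consequently, for every user $u$ and every cutoff $K$, the predicted DCG coincides with IDCG and $\mathrm{NDCG}@K(u) = 1$, which is the supremum of the metric. Averaging over users preserves the optimum, so the NDCG objective is globally optimized.

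For the not-necessary direction, I would construct a minimal counterexample showing that NDCG attains its maximum on many rankings that strictly differ from $\Psi_u$. Two structural observations suffice. First, $\mathrm{NDCG}@K$ only reads off the top-$K$ positions, so any permutation of the items ranked below position $K$ leaves the metric unchanged while producing a predicted ranking that disagrees with $\Psi_u$ on the tail. Second, whenever two items share the same relevance grade (which is always the case under the binary relevance signal commonly used in CF evaluation, and often the case even under graded relevance), swapping them inside the top-$K$ leaves DCG and hence NDCG invariant, while violating the fine-grained ordering prescribed by $\Psi_u$. Either observation exhibits a ranking that is NDCG-optimal yet not ideal-CF-optimal, establishing that the implication cannot be reversed.

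The technically delicate step I expect to be the main obstacle is pinning down the formal statement of "optimizing ideal CF" in a way that makes the direction of the implication unambiguous: the paper defines ideal CF as maximizing $\prod_u p(\Theta\mid \Psi_u)$, so I need to argue that this posterior is maximized precisely when the model reproduces the full ranking $\Psi_u$ (up to ties in the ground-truth preferences), and then connect that statement cleanly to the ordering used in DCG. Once this bridge between the probabilistic objective and the sorted-by-preference predicted list is made explicit, the sufficiency argument is essentially the rearrangement inequality and the counterexamples for non-necessity are straightforward; I would therefore spend most of the write-up on stating this bridge carefully and keep the remaining calculations brief.
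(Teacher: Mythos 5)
Your proposal is correct and follows essentially the same route as the paper: sufficiency by observing that the ideal ranking places the relevant items optimally (the paper phrases this as the top-$K$ positions coinciding with the positive set $\mathcal{P}$ under binary-relevance DCG, you phrase it via the rearrangement property giving $\mathrm{NDCG}=1$), and non-necessity by exhibiting a ranking that preserves the metric but differs from $\Psi_u$. Your equal-relevance swap inside the top-$K$ is the same idea as the paper's cyclic shift $\Psi_u'(j)=\Psi_u(j+1)$, $\Psi_u'(K)=\Psi_u(1)$, so no further comparison is needed.
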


\begin{proof} 
For convenience, we analyze discounted cumulative gain (DCG, with NDCG being its normalized version). DCG is defined as:
\begin{equation}
    DCG_{K} = \sum^{K}_{j=1} \frac{\mathbb{I}(i_{j} \in \mathcal{P})}{\log(j+1)}, 
\end{equation}
where $\mathbb{I}(\cdot)$ is an indicator function, and $\mathcal{P}$ represents the ground truth set of positive items.

\textbf{Sufficient Condition (if part):} If $\Psi_u$ achieves ideal CF, it will rank all positive items at the top, thereby satisfying:
\begin{equation}
\{\Psi_u(1), \Psi_u(2), \ldots, \Psi_u(K)\} = \mathcal{P}.
\end{equation}
This implies that ideal CF $\implies \max DCG_{K}$.

\textbf{Necessary Condition (only if part):}
We can construct a different ranking $\Psi_u^{\prime}$ such that:
\begin{equation}
\forall (j<K), \Psi_u^{\prime}(j) = \Psi_u(j+1), \Psi_u^{\prime}(K)= \Psi_u(1).
\end{equation}
Clearly, ${\Psi_u^{\prime}(1),\Psi_u^{\prime}(2),\Psi_u^{\prime}(3),\ldots \Psi_u^{\prime}(K)} = \mathcal{P}$, satisfying the maximum DCG${K}$. However, $\Psi_u \neq \Psi_u^{\prime}$, meaning that maximizing DCG${K}$ does not necessarily imply ideal CF.
\end{proof}

\section{Methodology}
In this section, we introduce a concrete implementation of the PRP. Note that these steps can be implemented by different methods and thus the overall paradigm is generic. The implementation hinges on two fundamental components: a ranker for generating pseudo-rankings and a ranking loss function tailored to effectively leverage the ranking information.

\subsection{Ranker}
The absence of ranking information in the dataset presents a significant challenge. To address this, we introduce a ranker module capable of generating accurate pseudo-rankings to facilitate training. The items to be ranked are represented as $M = \{i_1, i_2, i_3, \dots, i_k\}$, which are randomly sampled from the dataset. The ranker aims to learn a user-specific mapping $\gamma_{u}: M \rightarrow \pi_{u}$, where $\pi_{u}$ denotes the ranking result. Specifically, $\pi_{u}(v)$ represents the item at position $v$ in the ranking, and $\gamma_{u}(i)$ denotes the position of item $i$ in $\pi_{u}$. This mapping is required to satisfy the condition: $\forall i_v, i_w \in M, \text{ if } \gamma_{u}(i_v) < \gamma_{u}(i_w), \text{ then } i_v >_u i_w \text{ and } s_u(i_v) > s_u(i_w)$. We also impose an additional constraint aligned with a common assumption in CF: interacted items should receive higher scores than non-interacted ones~\cite{PC13,RFG12, MZW21}.

To realize this objective, an intuitive approach is to design a neural network that takes user $u$ and set $M$ as input, outputting scores for all items to enable ranking:
\begin{equation}
\begin{aligned}
s_u(i_1),s_u(i_2),\ldots, s_u(i_k) &= Rank_{u}(\mathbf{e}_{u},\mathbf{e}_{1},\mathbf{e}_{2},...,\mathbf{e}_{k}), \\
\pi_{u} = argsort(s_u(i_1),&s_u(i_2),\ldots, s_u(i_k)).
\label{eq:Prank}
\end{aligned}
\end{equation}
The $Rank_{u}$ module can employ any network architecture. However, for efficiency, we use a simple multi-layer perceptron (MLP) in this work. Despite its simplicity, this approach lacks a specific training objective, making it insufficient to guarantee the acquisition of reliable ranking knowledge. Consequently, there is no direct assurance that the network's output will correspond to the desired ranking outcome. Therefore, an additional training objective is required to guide the network's learning process. However, the lack of ranking information in the dataset creates a classic chicken-and-egg dilemma.

To circumvent this issue, we propose a noise injection mechanism to construct ranking signals based on interacted items, denoted as $\mathbf{e}_{p}^1$. The core intuition is that injecting a small amount of noise into an item $\mathbf{e}_{p}$ produces a new item $\mathbf{e}_{p}^2$ with slightly altered semantics, reflecting a slight decrease in user preference. Given the substantial noise magnitude, resulting in $\mathbf{e}_{p}^3$, the item's semantics degrade significantly, leading to a further reduction in user preference~\cite{HHD18,ZPL23}. We formalize this by applying two different magnitudes of noise to the original item $\mathbf{e}_{p}$, generating a set $\mathcal{P}_{n}$:

\begin{equation}
\begin{split}
\begin{aligned}
&\mathcal{P}_{n} = \{ i_{p}^{m} | \mathbf{e}_{p}^{m}=\mathcal{T}_{\theta_m}(\mathbf{e}_{p}),m=1,2,3\},\\
&\mathcal{T}_{\theta_m}(\mathbf{e}_{p}) = \mathbf{e}_{p} + \theta_{m} \epsilon_{u}, \\
\end{aligned}
\end{split}
\end{equation}
$\mathcal{T}_{\theta_m}$ represents the noise injection function, $\epsilon_{u}$ denotes the noise, and $\theta_m$ denotes the noise magnitude. We set $\theta_1 = 0, \theta_3 \gg \theta_2$. While random noise injection is a straightforward idea, it is unreliable. If a user’s preferences are robust, significant noise is necessary to alter their ranking preferences. Consequently, the noise distribution should be user-dependent. Assuming no additional prior knowledge, we model the noise as following a Gaussian distribution, with the parameters generated as follows:
\begin{align}
\mu_u &= MLP_1(\mathbf{e}_{u}),\\
\log \sigma_u^2 &= MLP_2(\mu_i).
\end{align}
We opt to fit $\log \sigma_u^2$ instead of $\sigma_u^2$ directly due to the non-negativity constraint on $\sigma_u^2$, which would otherwise require an activation function~\cite{YZL23}. Once we obtain the mean $\mu_u$ and variance $\sigma_u^2$, the latent noise $\epsilon_{u}$ is generated by sampling from $\mathcal{N}(\mu_u, \sigma_u^2)$. Direct optimization of this process is infeasible due to its non-differentiable nature. To address this, we employ the reparameterization trick~\cite{CLJH22}:
\begin{align}
\epsilon_{u} &= \mu_u + \sigma_u \cdot \eta,\\
\eta &\sim \mathcal{N}(0, \mathbf{I}),
\end{align}
where $\eta \sim \mathcal{N}(0, \mathbf{I})$, with $\mathbf{I}$ as the identity matrix, is Gaussian noise. The resulting ranking is then processed by our ranking loss function, denoted as $\mathcal{L}_{p}$, to supervise the ranker’s training:
\begin{equation}
\begin{split}
\begin{aligned}
\mathcal{L}_{p} = \mathcal{L}_{rank}^{\alpha}(u,\gamma_{u}(\mathcal{P}_{n})).
\end{aligned}
\end{split}
\end{equation}

\subsection{Ranking Loss Function}
With the ranker generating pseudo-ranking data, the challenge shifts to designing an appropriate loss function. Existing loss functions, typically following a pairwise paradigm, are inadequate for capturing complex ranking information. Our ideal CF theory suggests that recommendation tasks can be viewed as multiple ordinal classification problems. While a classification loss might seem applicable, the absence of explicit labels such as $(top1, top2, top3, \ldots, topN)$ renders this approach unfeasible. Nevertheless, the classification perspective inspires the design of our loss function. In classification tasks, the classical objective is to output scores such that the score for the target class exceeds that for non-target classes:
\begin{equation}
\max \sum_{v=1}^{k} \sum_{w=1, w\neq v}^{k} \max(z(v) - z(w) , 0).
\end{equation}
This formulation, however, only addresses the classification task and fails to account for the ordinal nature of ranking. Moreover, without label information, computing $z(\cdot)$ is impractical. To incorporate ordinal properties, we reformulate the problem to ensure that $\forall v < w, s_u(\pi_{u}(v)) > s_u(\pi_{u}(w))$:
\begin{align}
\mathcal{L}_{rank} &= \sum_{v=1}^{k} \sum_{w=v+1}^{k} \max(s_u(\pi_{u}(w)) - s_u(\pi_{u}(v)), 0)
\label{eq:rank_all}
\\
&= \sum_{v=1}^{k} \max(\max_{w \neq v}\{s_u(\pi_{u}(w))\} - s_u(\pi_{u}(v)), 0).
\label{eq:rank_max}
\end{align}
Equation~\ref{eq:rank_all} effectively decomposes the ranking problem into multiple sub-ranking problems. For example, if the ranking is $(i_3, i_1, i_2)$, the scores should satisfy the sub-rankings $s_u(i_3) > s_u(i_1) > s_u(i_2)$ and $s_u(i_1) > s_u(i_2)$. By exploiting the principle of transitivity, we establish that if the conditions $s_u(i_3) > s_u(i_1)$ and $s_u(i_1) > s_u(i_2)$ hold, we can directly infer the ordering $s_u(i_3) > s_u(i_1) > s_u(i_2)$, which ultimately leads to Equation~\ref{eq:rank_max}. However, the non-differentiability of the max function renders it unsuitable as a loss function, necessitating further modification:
\begin{align}
\mathcal{L}_{rank} &\approx \sum_{v=1}^{k} \log(1+\exp(\max_{w \neq v}\{s_u(\pi_{u}(w))\} - s_u(\pi_{u}(v))) \\ 
&\approx \sum_{v=1}^{k-1} \log(1+\exp(s_u(\pi_{u}(v+1)) - s_u(\pi_{u}(v))).
\end{align}
In the derivation above, we first replace the non-differentiable $max(\cdot,0)$ with $\operatorname{softplus}$ function $log(1+\exp(\cdot))$. Thanks to the ranker's result, we can then simply use $v+1$ to represent $\max_{w \neq v}\{s_u(\pi_{u}(w))\}$. This substitution eliminates the need for significant additional computational effort to find $\max_{w \neq v}\{s_u(\pi_{u}(w))\}$, thereby significantly enhancing the algorithm's efficiency. For instance, in the ranking $(i_3, i_1, i_2)$ ensuring $s_u(i_3) > s_u(i_1)$ and $s_u(i_1) > s_u(i_2)$ naturally constructs $s_u(i_3) > s_u(i_1) > s_u(i_2)$.

Given that the pseudo-ranking from the ranker may not always be accurate, we introduce a confidence mechanism. Specifically, we introduce a confidence coefficient $\alpha_v$. If a ranking is deemed inaccurate, we reduce its corresponding coefficient to lessen its impact on the overall loss:
\begin{align}
\mathcal{L}_{rank}^{\alpha} = \sum_{v=1}^{k-1} \alpha_{v} \log(1+\exp(s_u(\pi_{u}(v+1)) - s_u(\pi_{u}(v))).
\end{align}
An intuitive approach to designing $\alpha_{v}$ would be to set it as a hyperparameter, but this would require extensive expert knowledge and dataset-specific tuning. Inspired by previous works~\cite{GDH22,LLW19,WWS22}, we propose learning $\alpha_v$ based on gradient. If a sub-ranking produces a gradient that deviates from typical training values, it is likely incorrect, and we accordingly reduce its weight. First, we calculate the absolute gradient value $g_{v} = |\nabla_{\pi_{u}(v)} \mathcal{L}_{rank}|$ for all sub-rankings and identify the largest gradient value $max(g_{v})$. We then divide the gradients into ten groups ${G_1,G_2, \ldots, G_{10}}$ according to the interval  $[0, max(g_{v})]$. For a particular gradient $g_v$, if it belongs to group $G_v$, we calculate the number $N_{v}$ of gradients in that group and the ratio of the sum of gradients across all groups:
\begin{align}
\alpha_{v} = \frac{N_v}{\sum_{y=0}^{10}N_y}.
\end{align}
This approach ensures that if the number of gradients in a group is small, indicating that they are outliers, $\alpha_{v}$ is reduced, aligning with our objective. This concept can be interpreted as a form of statistical gradient density. Given that the gradient is inherently a continuous quantity, direct calculation of the density might yield $1/N$ for each gradient density. To address this, we approximate the density by grouping the gradients. The final loss function is thus formulated as follows:
\begin{align}
\mathcal{L} = \mathcal{L}_{rank}^{\alpha}(u,\pi_u) + \beta \mathcal{L}_p,
\end{align}
where $\mathcal{L}_{rank}(u,\pi_u)$ processes the ranking information from the ranker. $\mathcal{L}_p$ aids the ranker in learning ranking knowledge. $\beta$ is a hyperparameter to control the relative weights.

\subsection{Discussion}
Our newly proposed loss function offers a more effective approximation of the ideal CF. This prompts further reflection: \textit{is there an intuitive connection between our loss function and the classic loss?}
\begin{theorem}
\label{th:3}
BPR is a special case of ranking loss where $k=2$. BPR with hard negative sampling is a special case of longest sub-ranking loss where $k>2$. 
\end{theorem}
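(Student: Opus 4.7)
The plan is to establish both claims by direct algebraic substitution into the softplus form of the ranking loss, exploiting the well-known identity $-\ln\sigma(x)=\ln(1+e^{-x})$ that sits at the core of BPR. No deep structural argument is needed; the bulk of the work is lining up notation so that $\pi_u(1)$ plays the role of the positive item and the remaining ranked entries play the role of negatives.

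For the first statement, I would substitute $k=2$ into the final form $\mathcal{L}_{rank}\approx\sum_{v=1}^{k-1}\log(1+\exp(s_u(\pi_u(v+1))-s_u(\pi_u(v))))$. The sum collapses to a single term $\log(1+\exp(s_u(\pi_u(2))-s_u(\pi_u(1))))$. By the ranker constraint that interacted items are assigned higher scores than non-interacted ones, the pseudo-ranking of a $(i_p,i_n)$ pair must put $\pi_u(1)=i_p$ and $\pi_u(2)=i_n$. Then a one-line rewrite $\log(1+\exp(s_u(i_n)-s_u(i_p)))=-\ln\sigma(s_u(i_p)-s_u(i_n))$ recovers Equation~\ref{eq:bpr} verbatim, establishing the first claim.

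For the second statement, I would first identify the ``longest sub-ranking'' with the $v=1$ summand in Equation~\ref{eq:rank_max}, namely $\max(\max_{w\neq 1}\{s_u(\pi_u(w))\}-s_u(\pi_u(1)),0)$, which is the sub-chain of Equation~\ref{eq:rank_all} that spans all $k-1$ comparisons against the top-ranked item. Applying the softplus surrogate yields $\log(1+\exp(\max_{w\neq 1}s_u(\pi_u(w))-s_u(\pi_u(1))))$. Identifying $\pi_u(1)$ with $i_p$ and interpreting the inner $\max_{w\neq 1}$ over the remaining items as selecting the highest-scoring negative $i_n^{\mathrm{hard}}=\arg\max_{i_n}s_u(i_n)$ turns this into $-\ln\sigma(s_u(i_p)-s_u(i_n^{\mathrm{hard}}))$, which is precisely the BPR loss under hard negative sampling. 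Hence the longest sub-ranking term reduces to hard-negative BPR whenever $k>2$.

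The one non-routine point, and the place I expect to spend the most care, is justifying the reading of ``longest sub-ranking'' and connecting the deterministic $\max$ inside our loss to the stochastic notion of hard negative sampling. I would resolve this by noting that both operations extract, from a candidate pool, the negative item whose score is most confusable with the positive—our formulation simply performs this selection explicitly via the ranker and the inner $\max$ rather than through a sampling heuristic. Once this identification is made, both parts of the theorem follow from the same two-line softplus-to-sigmoid manipulation.
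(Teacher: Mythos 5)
Your proposal is correct and follows essentially the same route as the paper: rewrite the softplus term via $\log(1+e^{-x})=-\log\sigma(x)$ so that the $k=2$ case collapses to Equation~\ref{eq:bpr} with $\pi_u(1)=i_p$, $\pi_u(2)=i_n$, and for $k>2$ identify the single term in which the positive item is compared against $\max_{w\neq v}\{s_u(\pi_u(w))\}$ (your $v=1$ summand of Equation~\ref{eq:rank_max}, the paper's ``certain longest sub-ranking'') with BPR under hard negative sampling from $M-\{\pi_u(v)\}$. The level of rigor in bridging the deterministic inner $\max$ and the hard-negative-sampling interpretation matches the paper's own treatment, so no gap remains.
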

\begin{proof}
\begin{align}
\mathcal{L}_{rank} &= -\sum_{v=1}^{k-1} \log\left(\frac{1}{\exp(s_u(\pi_{u}(v)) - s_u(\pi_{u}(v+1))) + 1}\right)\\
&= -\sum_{v=1}^{k-1} \log(\sigma(s_u(\pi_{u}(v)) - s_u(\pi_{u}(v+1)))) \\
&= -\sum_{v=1}^{k-1} \log(\sigma(s_u(\pi_{u}(v)) - \max_{w \neq v}\{s_u(\pi_{u}(w))\})).
\end{align}
It shows that if we set $k=2$, our method is equivalent to BPR, where $\pi_{u}(v)$ is a positive sample, and $\max_{w \neq v}\{s_u(\pi_{u}(w))\}$ is equal to a negative sample. When $k>2$, obviously, BPR only considers a certain longest sub-ranking, where $\pi_{u}(v)$ is still a positive sample, $\max_{w \neq v}\{s_u(\pi_{u}(w))\}$ is equivalent to performing \textbf{hard negative sampling} from $M - \{\pi_{u}(v)\}$ and selecting the hardest sample. 

\end{proof}

\section{Experiments}
\begin{table*}[t]
\begin{center}
 \caption{The performance comparison between PRP and other methods. The best results are boldfaced, and the improvements are significant over the BPR under a two-sided t-test with $p<0.05$.}
\label{tab:Overall-comprehension}
\begin{tabular}{@{}llccccccc@{}}
\toprule
\textbf{Dataset} & \textbf{Methods} & HR@10 & HR@20 & Recall@10 & Recall@20 & NDCG@10 & NDCG@20  \\ \midrule
\multirow{8}{*}{ML-1M} 
  & \textbf{BPR} & 0.7419 & 0.8374 & 0.1635 & 0.2502 & 0.2554 &  0.2619 \\
  & \textbf{SRNS} & 0.7611 & 0.8533 & 0.1746 & 0.2618 & 0.2639 &  0.2687 \\
 & \textbf{MixGCF} & 0.7601 & 0.8483 & 0.1739 & 0.2604 & 0.2657 & 0.2702  \\
 & \textbf{UIB} & 0.7346 & 0.8318 & 0.1559 & 0.2406 & 0.2453 & 0.2520 \\ 
& \textbf{SimpleX} & 0.7586 & 0.8518 & 0.1709 & 0.2573 & 0.2626 & 0.2675  \\ 
& \textbf{ANS} & 0.7675 & 0.8583 & 0.1759 & 0.2625 & 0.2624 & 0.2683  \\
& \textbf{PRP} & \textbf{0.7796}& \textbf{0.8657} & \textbf{0.1866} & \textbf{0.2765} & \textbf{0.2822 }& \textbf{0.2874} \\ 
\cmidrule{2-8}
& \textbf{Improvement} & \textbf{5.08\%}& \textbf{3.38\%} & \textbf{14.13\%} & \textbf{10.51\%} & \textbf{10.49\%}& \textbf{9.74\%} \\ 

 \midrule
 
\multirow{8}{*}{Gowalla} 
  & \textbf{BPR} & 0.1989 & 0.2763 & 0.0938 & 0.1389 & 0.0674 &  0.0804 \\
  & \textbf{SRNS} & 0.2295 & 0.3177 & 0.1171 & 0.1711 & 0.0814 & 0.0972  \\ 
 & \textbf{MixGCF} & 0.2401 & 0.3254 & 0.1189 & 0.1728 & 0.0853 & 0.1008  \\
 & \textbf{UIB} & 0.2275 & 0.3161 & 0.1140 & 0.1695 & 0.0794 & 0.0954 \\ 
& \textbf{SimpleX} & 0.2192 & 0.3082 & 0.1074 & 0.1617 & 0.0736 & 0.0892  \\
& \textbf{ANS} & 0.2467 & 0.3317 & 0.1206 & 0.1756 & 0.0869 & 0.1027  \\ 
& \textbf{PRP} & \textbf{0.2528} & \textbf{0.3395} & \textbf{0.1229} & \textbf{0.1777} & \textbf{0.0884} &\textbf{0.1040}  \\ 
\cmidrule{2-8}
& \textbf{Improvement} & \textbf{27.10\%}& \textbf{22.87\%} & \textbf{31.02\%} & \textbf{27.93\%} & \textbf{31.16\% }& \textbf{29.35\%} \\ 

 \midrule

 \multirow{8}{*}{Foursquare} 
   & \textbf{BPR} & 0.1717 & 0.2336 & 0.0262 & 0.0381 & 0.0283 &  0.0328 \\
  & \textbf{SRNS} & 0.1948 & 0.2596 & 0.0314 & 0.0418 & 0.0351 & 0.0392\\
 & \textbf{MixGCF} & 0.1994 & 0.2752 & 0.0316 & 0.0458 & 0.0351 & 0.0405  \\
 & \textbf{UIB} & 0.1662 & 0.2299 & 0.0265 & 0.0375 & 0.0280 & 0.0322 \\ 
& \textbf{SimpleX} & 0.1801 & 0.2650 & 0.0267 & 0.0421 & 0.0267 & 0.0329  \\ 
& \textbf{ANS} & 0.1958 & \textbf{0.2798} & 0.0303 & 0.0461 & 0.0321 & 0.0386  \\ 
& \textbf{PRP} & \textbf{0.2124} & 0.2752 & \textbf{0.0332} & \textbf{0.0461} & \textbf{0.0367} & \textbf{0.0413} \\ 
\cmidrule{2-8}
& \textbf{Improvement} & \textbf{23.70\%}& \textbf{17.81\%} & \textbf{26.72\%} & \textbf{20.73\%} & \textbf{29.68\%}& \textbf{25.91\%} \\ 
 
 \midrule
 
\multirow{8}{*}{Yelp} 
   & \textbf{BPR} & 0.1537 & 0.2398 & 0.0427 & 0.0717 & 0.0334 &  0.0432 \\
  & \textbf{SRNS} & 0.1834 & 0.2779 & 0.0520 & 0.0862 & 0.0406 & 0.0522 \\
 & \textbf{MixGCF} & 0.1891 & 0.2781 & 0.0538 & 0.0871 & 0.0424 & 0.0536  \\
 & \textbf{UIB} & 0.1592 & 0.2504 & 0.0445 & 0.0756 & 0.0350  & 0.0455 \\ 
& \textbf{SimpleX} & 0.1850 & 0.2780 & 0.0519 & 0.0861 & 0.0410 & 0.0524  \\ 
& \textbf{ANS} & 0.1844 & 0.2777 & 0.0524 & 0.0862 & 0.0414 & 0.0527  \\ 
& \textbf{PRP} & \textbf{0.2000} & \textbf{0.2940} & \textbf{0.0577} & \textbf{0.0928} & \textbf{0.0464} & \textbf{0.0581} \\ 
\cmidrule{2-8}
& \textbf{Improvement} & \textbf{30.12\%}& \textbf{22.60\%} & \textbf{35.13\%} & \textbf{29.43\%} & \textbf{38.92\%}& \textbf{34.49\%} \\ 

 \bottomrule
 \end{tabular}
  \end{center}
\end{table*}
In this section, we conduct comprehensive experiments to demonstrate the performance of PRP.
\subsection{Experimental Settings}
\textbf{Datasets}. We evaluate PRP on four widely used public real-world datasets: (1) \textbf{MovieLens} contains user ratings on movies. We use 1M versions and treat rating movies as interacted items. (2) \textbf{Yelp} contains user reviews of restaurants and bars after Jan.1st, 2018. (3) \textbf{Gowalla} is a check-in dataset where users share their locations. (4) \textbf{Foursquare} consists of check-ins in NYC and Tokyo over approximately 10 months. These datasets have different statistical properties, which can reliably validate the performance of a model~\cite{CCC22}.

\textbf{Baselines}. 
To assess the effectiveness of PRP, we compare it against several competitive methods that represent different research directions: \textbf{BPR}~\cite{RFG12}, \textbf{SimpleX}~\cite{MZW21},  \textbf{UIB}~\cite{ZZY22}, \textbf{SRNS}~\cite{DQY20},\textbf{MixGCF}~\cite{HDD21}, \textbf{ANS}~\cite{ZCL23}. 
Furthermore, to validate the broad applicability of PRP, we integrate it with various mainstream CF models, including: \textbf{MF}~\cite{RFG12}, \textbf{NGCF}~\cite{WHW19}, and \textbf{LightGCN}~\cite{HDW20}.

All data preprocessing, dataset partitioning and implementation of the methods are carried out using the RecBole v1.1.1 framework~\cite{ZMH21}.

\subsection{Overall Performance}
\begin{table*}[t]
\caption{Experimental results of different CF models with (w) or without (w/o) the PRP.}
\centering
\begin{tabular}{@{}llcccccccc@{}}
\toprule
\multirow{2}{*}{\textbf{Dataset}} & \multirow{2}{*}{\textbf{Metric}} & \multicolumn{2}{c}{\textbf{MF}} & \multicolumn{2}{c}{\textbf{LightGCN}} & \multicolumn{2}{c}{\textbf{NGCF}} & 
\\ \cmidrule(l){3-9} 
 &  & w/o & w & w/o & w & w/o & w  \\ \midrule
\multirow{6}{*}{Yelp} 
 & Recall@10 & 0.0427 & \textbf{0.0577} & 0.0543 & \textbf{0.0655} & 0.0439 & \textbf{0.0514}  \\
 & Recall@20 & 0.0717 & \textbf{0.0928} & 0.0884 & \textbf{0.1044} & 0.0738 & \textbf{0.0834}   \\
 & HR@10 & 0.1537 & \textbf{0.2000} & 0.1896 & \textbf{0.2210} & 0.1593 & \textbf{0.1811}   \\
 & HR@20 & 0.2398 & \textbf{0.2940} & 0.2847 & \textbf{0.3232} & 0.2460 & \textbf{0.2705}  \\
 & NDCG@10 & 0.0334 & \textbf{0.0464} & 0.0432 & \textbf{0.0525} & 0.0345 & \textbf{0.0406}  \\
 & NDCG@20 & 0.0432 & \textbf{0.0581} & 0.0546 & \textbf{0.0655} & 0.0447 & \textbf{0.0512}   \\
\cmidrule(l){2-9} 
  & Improvement &  & \textbf{31.78\%} &  & \textbf{18.38\%} &  & \textbf{14.33\%} \\ \midrule 

\multirow{6}{*}{Gowalla} 
 & Recall@10 & 0.0938 & \textbf{0.1236} & 0.1280 & \textbf{0.1360} & 0.1011 & \textbf{0.1225}   \\
 & Recall@20 & 0.1389 & \textbf{0.1787} & 0.1849 & \textbf{0.1967} & 0.1490 & \textbf{0.1745}   \\
 & HR@10 & 0.1989 & \textbf{0.2534} &  0.2553 & \textbf{0.2712} & 0.2110 & \textbf{0.2468}  \\
 & HR@20 & 0.2763 & \textbf{0.3366} & 0.3432 & \textbf{0.3626} & 0.2900 & \textbf{0.3321}   \\
 & NDCG@10 & 0.0674 & \textbf{0.0896} & 0.0917 & \textbf{0.0981} & 0.0722 & \textbf{0.0856} \\
 & NDCG@20 & 0.0804 & \textbf{0.1052} & 0.1079 & \textbf{0.1155} & 0.0859 &  \textbf{0.1010}  \\
\cmidrule(l){2-9} 
  & Improvement &  & \textbf{29.91\%} &  & \textbf{6.42\%} &  & \textbf{17.65\%} \\ \midrule 
\end{tabular}
\label{tab:with}
\end{table*}
Table~\ref{tab:Overall-comprehension} presents the performance results of PRP compared to several baseline methods. We also integrated PRP with various mainstream CF models, as shown in Table~\ref{tab:with}. The reported improvements are statistically significant under a two-sided t-test with $p < 0.05$. Our observations are as follows: 
(1) PRP consistently achieves state-of-the-art results across almost all cases. Notably, it significantly outperforms traditional methods like BPR, with the highest improvement of 38.9\% observed on the Yelp dataset. This underscores the substantial performance gains achieved by incorporating ranking information into the model.
(2) Negative sampling (NS) methods achieve sub-optimal results. While NS methods can capture information from sub-rankings where $k>2$, BPR is limited to $k=2$. This ability of NS methods to utilize more extensive sub-ranking information explains their improved performance over BPR. However, NS methods still overlook other sub-ranking information, leading to their lower performance compared to PRP.
(3)  Methods such as ANS and MixGCF generate harder synthetic negative samples $i_n^{syn}$, leading to impressive results. This is reasonable, as generating harder $i_n^{syn}$ implicitly aligns with the ranking information $i_n^{syn} > i_n$. However, synthesizing these negative samples requires careful design and involves significant computational overhead. Moreover, since only a specific ranking is considered, there remains a performance gap when compared to PRP. (4) Methods such as SimpleX use large numbers of negative samples to construct more pairwise interactions. Although they have the potential to recover ranking information if we have $\Omega(|U|N^2)$ accurate pairs~\cite{ABB94,MG17}, this is impractical in real-world scenarios~\cite{LSC21,LGZ24}. (5) As shown in Table~\ref{tab:with}, integrating PRP leads to significant performance improvements for all base models in all cases. Surprisingly, even with a simple BPR-MF model, PRP achieves performance comparable to more sophisticated GNN models. This demonstrates the superior effectiveness and broad applicability of the PRP approach.

\subsection{In-depth Analysis}
\begin{figure*}[t]
\center
\subfigure[Foursquare]{
\begin{minipage}[t]{0.235\linewidth} 
\centering
\includegraphics[width=\linewidth]{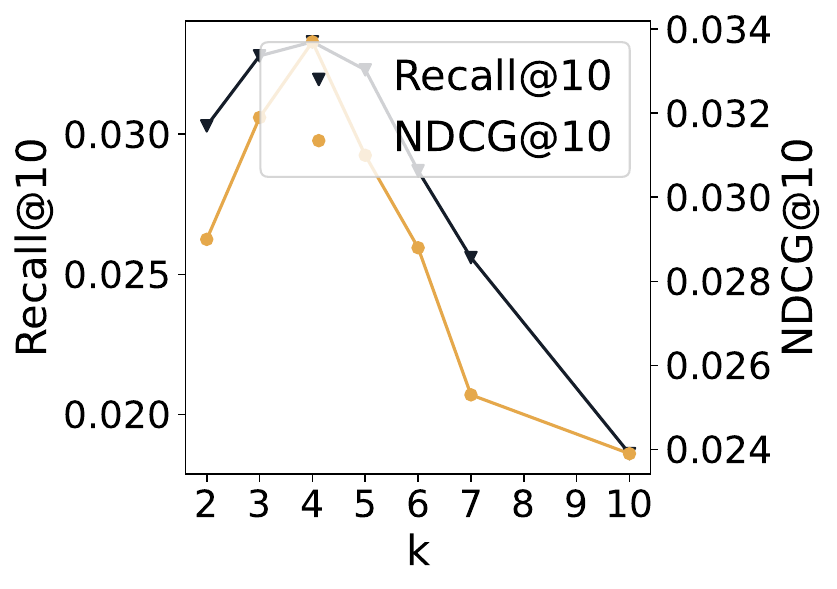}
\label{fig:k_four}
\end{minipage}
}
\subfigure[Gowalla]{
\begin{minipage}[t]{0.227\linewidth}
\centering
\includegraphics[width=\linewidth]{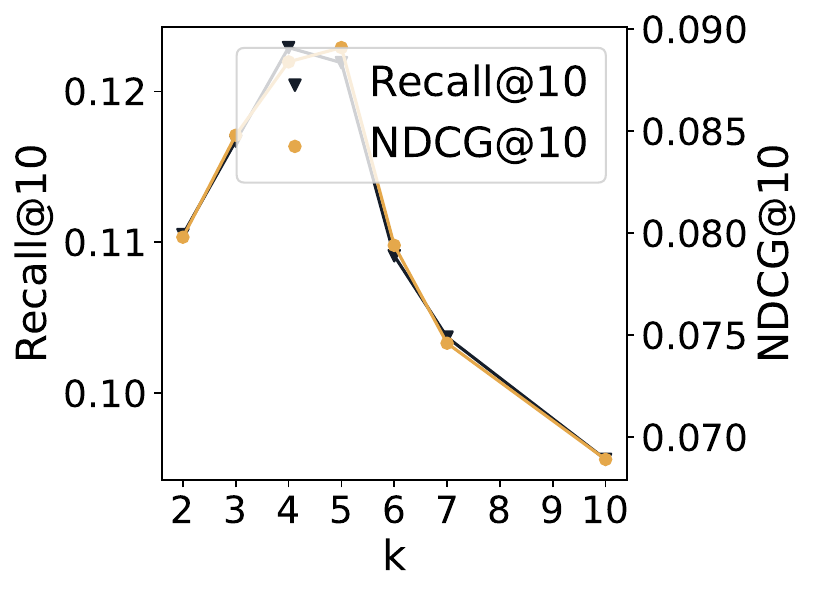}
\label{fig:k_Gowalla}

\end{minipage}
}
\subfigure[Foursquare]{
\begin{minipage}[t]{0.232\linewidth}
\centering
\includegraphics[width=\linewidth]{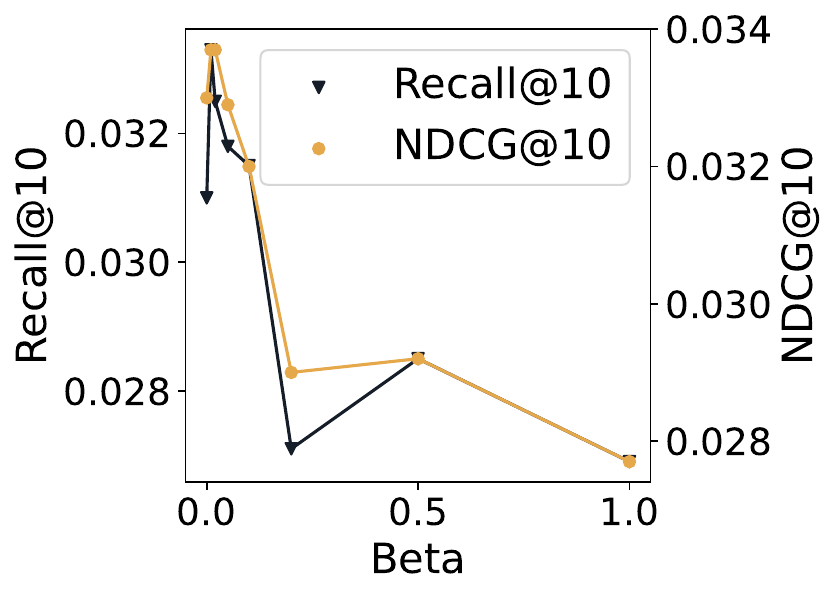}
\label{fig:beta_four}
\end{minipage}
}
\subfigure[Gowalla]{
\begin{minipage}[t]{0.238\linewidth}
\centering
\includegraphics[width=\linewidth]{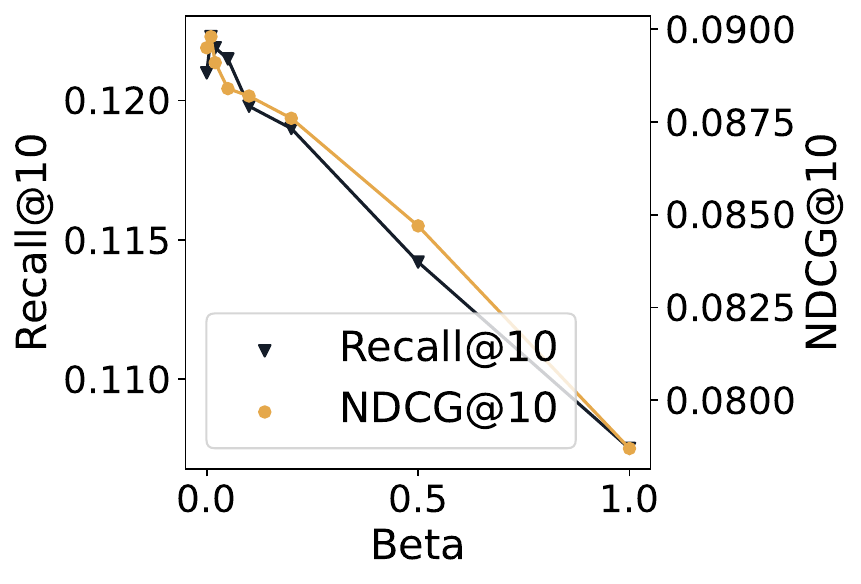}
\label{fig:beta_Gowalla}
\end{minipage}
}
\caption{Experiments to demonstrate the influence of ranking length and supervised magnitude.}
\vspace{-4mm}
\end{figure*}

\textbf{Ranking Length}. The parameter $k$ determines the number of items to be ranked. Intuitively, the larger the $k$, the closer it approximates the full ranking needed for ideal CF, potentially enhancing performance. However, as $k$ increases, ranking becomes more challenging. Due to data quality limitations, model capacity, and inherent noise, it is impractical to let $k$ increase indefinitely. Experimental results presented in Figure~\ref{fig:k_four}, ~\ref{fig:k_Gowalla} confirm this, showing a sharp performance drop for $k>5$, indicating the ranker's inability to accurately rank and the detrimental impact of incorrect rankings on model performance. Additionally, a larger $k$ affects the method's efficiency, necessitating careful selection based on the specific application scenario.

\textbf{Supervised Magnitude}. The parameter $\beta$ adjusts the supervised magnitude of $L_p$ on the ranker. Experimental results, presented in Figure~\ref{fig:beta_four}, ~\ref{fig:beta_Gowalla}, show that performance initially increases with $\beta$ but decreases beyond a certain point. Initially, a higher $\beta$ allows $L_p$ to provide more information for ranker training, improving pseudo-ranking quality. However, an excessively high $\beta$ simplifies the noise injection ranking problem too much, making $L_p$ optimization easier than the recommendation task. This shifts the model's focus away from the recommendation task, degrading performance.
\begin{figure*}[t]
\center
\subfigure[Foursquare]{
\begin{minipage}[t]{0.256\linewidth} 
\centering
\includegraphics[width=\linewidth]{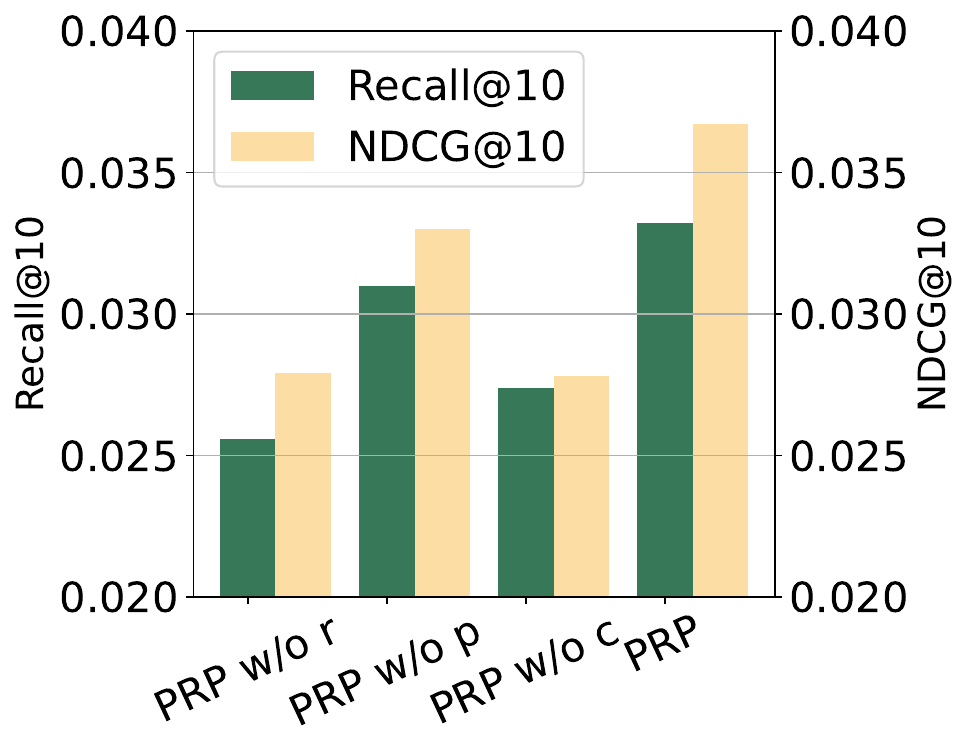}
\label{fig:ablation_foursquare}
\end{minipage}
}
\subfigure[Gowalla]{
\begin{minipage}[t]{0.242\linewidth}
\centering
\includegraphics[width=\linewidth]{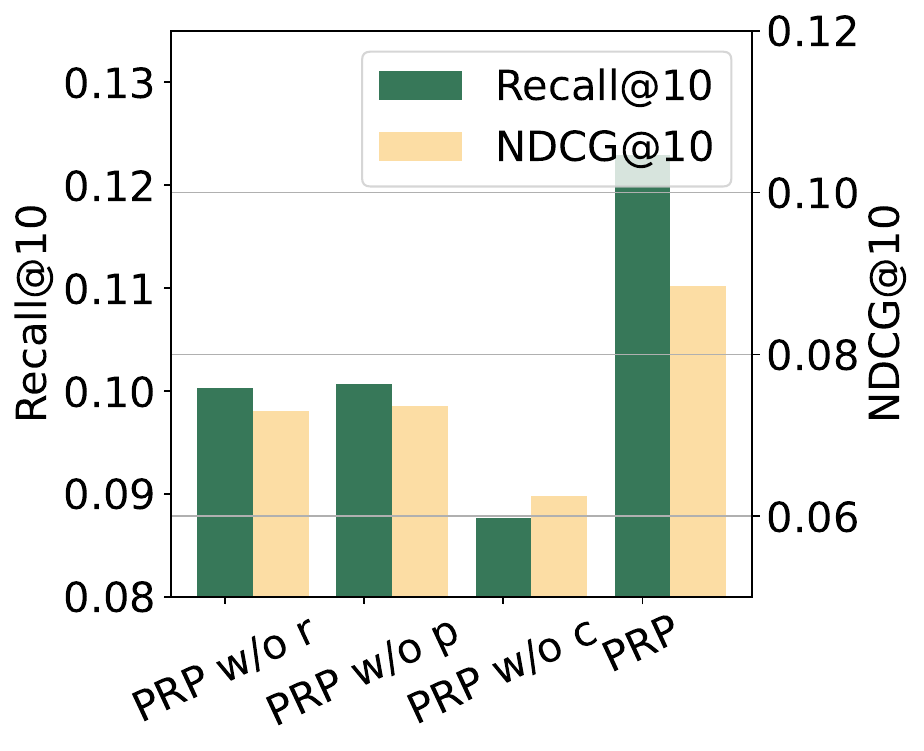}
\label{fig:ablation_Gowalla}
\end{minipage}
}
\subfigure[Yelp]{
\begin{minipage}[t]{0.207\linewidth}
\centering
\includegraphics[width=\linewidth]{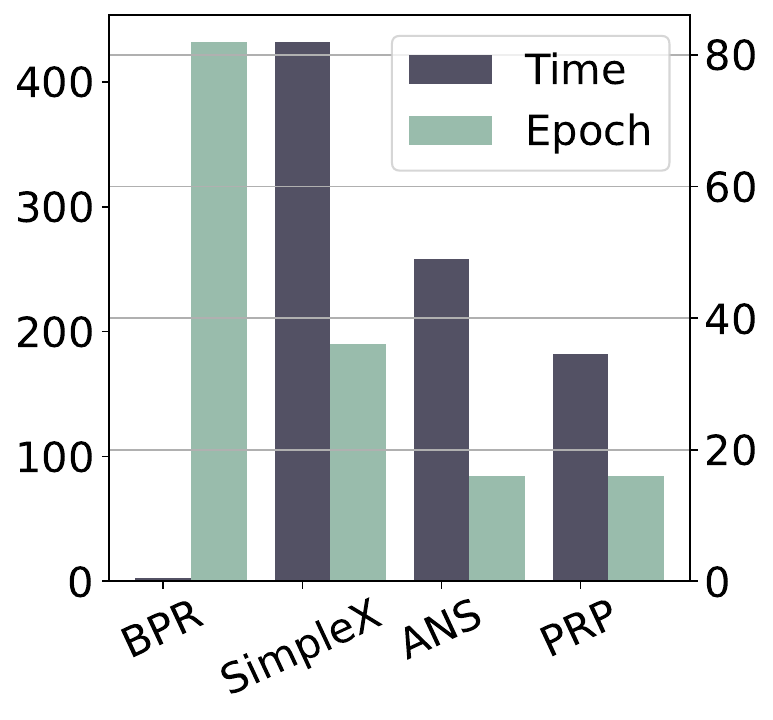}
\label{fig:time_Yelp}
\end{minipage}
}
\subfigure[Gowalla]{
\begin{minipage}[t]{0.207\linewidth}
\centering
\includegraphics[width=\linewidth]{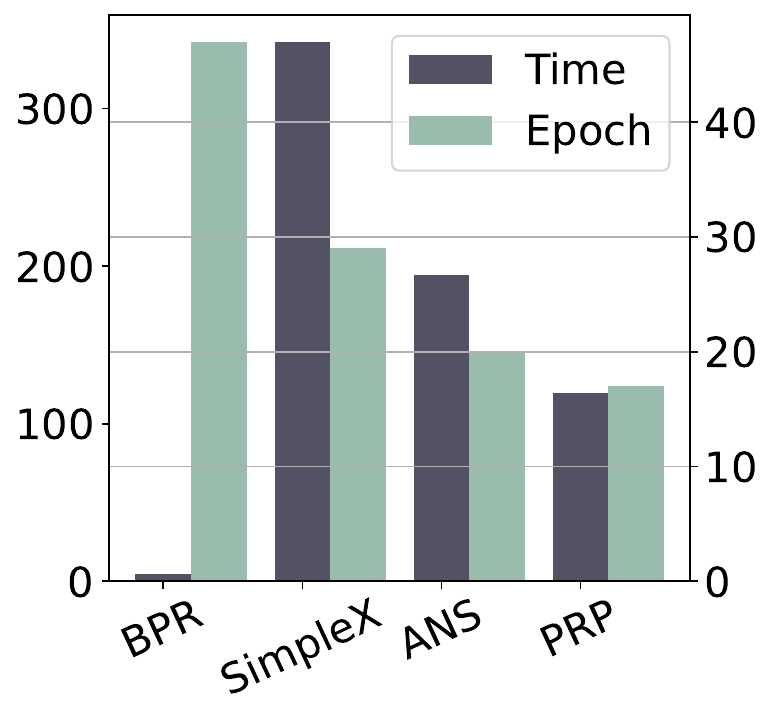}
\label{fig:time_Gowalla}
\end{minipage}
}
\caption{Experimental results to demonstrate the ablation study and efficiency analysis.}
\label{fig:trap}
\end{figure*}

\textbf{Ablation Study}. We analyze the effectiveness of different components through the following variants: (1) PRP without ranker (PNN w/o r), (2) PRP without $\mathcal{L}_p$ (PNN w/o p), (3) PRP without confidence (PNN w/o c). The results, presented in Figure~\ref{fig:ablation_foursquare}, ~\ref{fig:ablation_Gowalla} indicate that each component positively contributes to model performance. In essence, these three components ensure that the model effectively utilizes accurate ranking information, underscoring the importance of constructing a reasonable ranking. If the ranking is incorrect, the ranking loss becomes meaningless.

\textbf{Efficiency Analysis}. Consistent with previous works~\cite{MZW21, WYM22, ZCL23}, we compare the efficiency of PRP with other representative methods (BPR, SimpleX, and ANS). Figure~\ref{fig:time_Yelp},~\ref{fig:time_Gowalla} present the average training time (sec.) per epoch and the number of epochs required to converge. As anticipated, BPR is the most efficient due to its simplicity. SimpleX, on the other hand, shows the least efficiency, attributed to its dependence on massive negative samples and filtering of uninformative ones. PRP emerges as the second most efficient method. Given its outstanding performance, we assert that PRP stands out as a highly desirable choice.

\section{Related Work}
Pairwise loss functions lie in the core of existing implicit CF. Since they aim to maximize the posterior probability $\prod_ {(u, i_p, i_n) \in D}p(\Theta \mid i_p>_{u}i_n)$, there have been two major research directions. The first line focuses on how to effectively form pairwise relationships (i.e., $>_{u}$). BPR~\cite{RFG12, SGZ20,YYG21} assumes that an interacted item is more similar to the user than a randomly sampled uninteracted item. CML~\cite{HYC17} encodes not only users' preferences but also the user-user and item-item similarity. SimpleX~\cite{MZW21} introduces massive negative samples and removes uninformative items. UIB~\cite{ZZY22} introduces a user interest boundary to penalize samples crossing a threshold. The second line of research aims to find better negative items to contrast positive items. PNN~\cite{ZCH24} introduces a neutral class and a novel positive-neutral-negative learning paradigm. The state-of-the-art focuses on identifying hard negative samples that help tighten a model's decision boundary. DNS~\cite{ZCW13} selects negative items that are more similar to a user. IRGAN~\cite{WYZ17} utilizes a generative adversarial network to compute the probabilities of negative samples by a min-max game. ReinforcedNS~\cite{DQH19} uses reinforcement learning to guide the sampling process. MixGCF~\cite{HDD21} integrates information from a graph structure and positive samples to enhance the hardness of negative items. GDNS~\cite{ZZH22} develops a gain-aware function to calculate the probability of an item being a real negative sample. SRNS~\cite{DQY20} favors low-variance samples to avoid false negative samples. DENS~\cite{LCZ23} disentangles relevant and irrelevant factors of samples to select appropriate negative samples. ANS~\cite{ZCL23} proposes to generate synthetic negative samples to improve performance. 

\section{Conclusion}
In this work, we creatively apply the concept of multiple ordinal classification to highlight the inherent gap between pairwise methods and the ideal CF. Based on this theory, we introduce an innovative pseudo-ranking paradigm (PRP). Our implementation features a novel ranker that generates pseudo-rankings, supervised by a noise mechanism. We also propose a new loss function to handle ranking information effectively. Furthermore, this loss is enhanced with a gradient-based confidence mechanism that mitigates abnormal rankings. Extensive experiments demonstrate that PRP significantly outperforms state-of-the-art methods. This work marks a significant advancement in CF modeling, shifting the focus from pairwise comparisons to a more comprehensive ranking approach, and represents a major step toward realizing ideal CF models.

\clearpage
\section*{Acknowledgements}
This work was supported by the Heilongjiang Key R\&D Program of China under Grant No. GA23A915 and partially supported by Hong Kong Baptist University IG-FNRA project (RC-FNRA-IG/21-22/SCI/01) and Key Research Partnership Scheme (KRPS/23-24/02).
\bibliography{aaai25}

\end{document}